\theoremstyle{definition}
\newtheorem{theorem}{Theorem}[section]
\newtheorem{corollary}[theorem]{Corollary}
\newtheorem{proposition}[theorem]{Proposition}
\newtheorem{remark}[theorem]{Remark}
\def\@seccntformat#1{\@ifundefined{#1@cntformat}%
	{\csname the#1\endcsname\quad}%      default
	{\csname #1@cntformat\endcsname}%    enable individual control
}
\newif\ifShowComments
\def\strutdepth{\dp\strutbox}
\def\druk#1{\strut\vadjust{\kern-\strutdepth
        {\vtop to \strutdepth{%
                \baselineskip\strutdepth\vss
                        \llap{\hbox{#1}\quad}\null}}}}
\title{\bf
%Gini coefficient estimator bias for geometric distributions
%
%Closed-form bias formula for the extended $m$th Gini index estimation under Gamma distributions
%
%An unbiased estimator of a novel extended Gini index for gamma distributed populations
%Gini Index Extensions
Unbiased estimation in new Gini index extensions under gamma distributions, with application to real income data
}
\author{
\text{Roberto Vila}$^{1}$\thanks{Corresponding author: Roberto Vila, email: {rovig161@gmail.com}
%\newline
%%{\it Preprint submitted to Annals of the Institute of Statistical Mathematics on \today}
}
\,\,\,and
\text{Helton Saulo}$^{1,2}$
\\
{\small $^{1}$ Department of Statistics, University of Brasilia, Brasilia, Brazil}\\
{\small $^{2}$ Department of Economics, Federal University of Pelotas, Pelotas, Brazil}\\
}
\begin{document}
	\maketitle 	
	\begin{abstract}
%In this paper, we propose two new flexible Gini indices (extended lower and upper) defined via differences between the first observation, the smallest order statistic, and the largest order statistic, for any $m\geqslant 2$ . For gamma-distributed data, we obtain exact expectations of the estimators and establish their unbiasedness, generalizing prior works by \cite{Deltas2003} and \cite{Baydil2025}. Finite-sample performance is assessed via simulation, and real income data set is analyzed to illustrate the proposed measures.
%
{
In this paper, we introduce two flexible extensions of the classical Gini index, referred to as the extended lower and upper Gini indices. The proposed measures are based on the differences between an observation and the minimum and maximum order statistics in samples of size $m\geqslant 2$ and reduce to the classical Gini coefficient when $m=2$. Unlike conventional Gini-type measures, they provide a position-oriented assessment of inequality relative to the lower and upper tails of the distribution. We establish the consistency and asymptotic normality of the proposed estimators under mild regularity conditions. For gamma-distributed populations, we derive exact expressions for their expectations and prove their unbiasedness, thereby extending previous results of \cite{Deltas2003} and \cite{Baydil2025}. The finite-sample performance of the estimators is investigated through Monte Carlo simulations, and an application to 2023 GDP per capita data from South American countries illustrates the practical usefulness of the proposed measures. The results show that the extended lower and upper Gini indices provide a richer and more informative characterization of inequality than traditional Gini-type measures.
}
	\end{abstract}
	\smallskip
	\noindent
	{\small {\bfseries Keywords.} {Gamma distribution, extended lower Gini index (estimator), extended upper Gini index (estimator), $m$th Gini index (estimator), unbiased estimator.}}
	\\
	{\small{\bfseries Mathematics Subject Classification (2010).} {MSC 60E05 $\cdot$ MSC 62Exx $\cdot$ MSC 62Fxx.}}
%	
%%	\tableofcontents

\section{Introduction}

{
Measuring income inequality is a central issue in economics, with direct implications for welfare analysis, public policy, and economic development. The classical Gini index \citep{Gini1936} is one of the most widely used measures of inequality due to its simplicity and ease of interpretation. However, by condensing the entire income distribution into a single scalar, the Gini coefficient may obscure relevant distributional features. In particular, distinct income distributions can exhibit identical Gini values while differing substantially in terms of concentration at the lower, middle, or upper parts of the distribution.

Motivated by this limitation, recent studies have sought to enrich the informational content of Gini-type measures. Notably, \citet{Gavilan-Ruiz2024} introduced the $m$th Gini index, defined as the normalized expected range of random samples of size $m$. This index generalizes the classical Gini coefficient and allows inequality to be assessed across different sample sizes. Nevertheless, as it is solely based on the sample extremes, the $m$th Gini index remains an aggregate measure and does not provide insight into how inequality is distributed across specific positions within the income distribution.

In this paper, we introduce two novel Gini-type measures, referred to as the extended lower and extended upper Gini indices. These indices are constructed from the differences between a selected observation and, respectively, the smallest and largest observations in samples of size $m\geqslant 2$. Although the definitions involve a specific observation, the corresponding population coefficients are invariant with respect to its position. Indeed, by exchangeability,
\[
\mathbb{E}\left[X_i-\min\{X_1,\ldots,X_m\}\right]
=
\mathbb{E}\left[X_j-\min\{X_1,\ldots,X_m\}\right]
\]
and
\[
\mathbb{E}\left[\max\{X_1,\ldots,X_m\}-X_i\right]
=
\mathbb{E}\left[\max\{X_1,\ldots,X_m\}-X_j\right],
\]
for all $i,j=1,\ldots,m$. Consequently, the proposed population coefficients do not depend on the particular index chosen. In contrast, their sample estimators do depend on the selected position, since different observations within a subsample generally yield different finite-sample realizations. Thus, distinct estimators may be constructed for the same population coefficient, although all target the same underlying measure of inequality.

The proposed indices provide a position-oriented perspective on inequality. While the $m$th Gini index summarizes the expected range within a sample, the extended lower and upper Gini indices quantify how inequality is manifested relative to the lower and upper tails of the distribution. In particular, they generalize both the classical Gini coefficient and the $m$th Gini index and admit a natural decomposition whose sum coincides with the latter.

From an inferential perspective, we establish the consistency and asymptotic normality of the proposed estimators under mild regularity conditions. Furthermore, for gamma-distributed populations, we derive exact expressions for their expectations and prove their unbiasedness, thereby extending previous results available in the literature. The choice of the gamma distribution is motivated not only by its widespread use in modeling income and other nonnegative economic variables, but also by its analytical tractability. In particular, several expectations involving order statistics admit closed-form expressions under gamma models, making it possible to obtain exact finite-sample results. For many other positive-support distributions, analogous calculations become considerably more involved and often do not yield explicit formulas.

To the best of our knowledge, these are the first Gini-type measures that explicitly separate inequality contributions relative to the lower and upper sample extremes while possessing both exact finite-sample unbiasedness under gamma models and standard large-sample inferential properties. Monte Carlo simulations corroborate the theoretical findings, and an application to GDP per capita data illustrates how the proposed indices provide a richer and more informative assessment of inequality than traditional measures.

The remainder of the paper is organized as follows. Section~\ref{sec:02} introduces the extended lower and upper Gini indices and discusses their main properties and characterizations. Section~\ref{sec:03} presents the corresponding estimators and establishes their consistency, asymptotic normality, and unbiasedness. Section~\ref{sec:04} reports simulation results. Section~\ref{sec:05} applies the proposed indices to GDP per capita data, and Section~\ref{sec:06} concludes.
}

\section{New extended Gini indices, properties and characterizations}\label{sec:02}

Let $X_1, X_2, \ldots, X_m$ be independent and identically distributed (iid) random variables with the same
distribution as a non-negative random $X$ with mean $\mu=\mathbb{E}(X)>0$. 
%and let  $X_{1:m}=\min\{X_1,\ldots,X_m\}$ and $X_{m:m}=\max\{X_1,\ldots,X_m\}$ be the
%smallest and largest order statistics, respectively. 
For each integer $m\geqslant 2$, the {extended lower Gini index} of $X$ is defined as 
\begin{align}\label{extended-Gini}
	IG_{m;\text{min}}
	\equiv
	IG_{m;\text{min}}(X)
	=
	\dfrac{\mathbb{E}[X_{1}-\min\{X_1,\ldots,X_m\}]}{m\mu}.
\end{align}
Analogously, for each integer $m\geqslant 2$, the {extended upper Gini index} of $X$ is defined as 
\begin{align}\label{extended-Gini-1}
	IG_{m;\text{max}}
	\equiv
	IG_{m;\text{max}}(X)
	=
	\dfrac{\mathbb{E}[\max\{X_1,\ldots,X_m\}-X_{1}]}{m\mu}.
\end{align}

Note that $IG_{m;\text{min}}+IG_{m;\text{max}}$
reduces to $m$th Gini index, $IG_m$, recently introduced by  \cite{Gavilan-Ruiz2024}. Table~\ref{tab:extended_gini_comparison} highlights key differences between the \( m \)th Gini index of \cite{Gavilan-Ruiz2024} and the proposed extended indices. Note that while \( IG_m \) captures overall inequality via the sample range, the extended indices offer position-specific insights by comparing the first observation to the sample extremes.
%
%\begin{table}[H]
%\small
%\centering
%\caption{Key differences between \( IG_m \) and $\,_iIG_{m}$ (\( \,^iIG_{m} \)).}
%\begin{tabular}{|p{3cm}|p{7.0cm}|p{6.5cm}|}
%\hline
%{Aspect} & {$m$th Gini Index (\( IG_m \) )} & {Extended Gini Indices} \\
%\hline 
%{Definition} & \vspace*{0.05cm}
%\( \displaystyle  \frac{\mathbb{E}[\max\{X_1, \ldots, X_m\} - \min\{X_1, \ldots, X_m\}]}{m\mu} \) &
% \vspace*{0.001cm}
%Lower: \( \displaystyle \frac{\mathbb{E}[X_1 - \min\{X_1, \ldots, X_m\}]}{m\mu} \) \newline
%Upper: \( \displaystyle \frac{\mathbb{E}[\max\{X_1, \ldots, X_m\} - X_1]}{m\mu} \) 
%\\ 
%\hline
%{Reference Points} &
%Difference between maximum and minimum values &
%Difference between the first observation and the minimum (lower) or maximum (upper) \\
%\hline
%{Number of Indices} &
%One per \( m \) &
%Two per \( m\) \\
%\hline
%{Decomposition} &
%Single aggregate measure &
%\( IG_{m;\text{min}} + IG_{m;\text{max}} = IG_m \) \\
%\hline
%% % % {Positional Sensitivity} &
%% % % Insensitive to order within the sample &
%% % % Sensitive to the position \( i \); captures inequality relative to extremes \\
%% % % \hline
%{Interpretive Use} &
%Aggregate inequality measure across the entire sample &
%Fine-grained analysis of inequality at specific sample positions \\
%\hline
%\end{tabular}
%\label{tab:extended_gini_comparison}
%\end{table}
%
\begin{table}[htb!]
	\centering
	\small
	\caption{Comparison between the $m$th Gini index and the proposed extended Gini indices.}
	\label{tab:extended_gini_comparison}
	\renewcommand{\arraystretch}{1.2}
	\resizebox{\textwidth}{!}{%
	\begin{tabular}{p{3.2cm} p{5.8cm} p{6.8cm}}
		\toprule
		\textbf{Aspect} &
		\textbf{$m$th Gini Index ($IG_m$)} &
		\textbf{Extended Gini Indices} \\
		\midrule
		
		Definition &
		$
		IG_m=
		\frac{\mathbb{E}[X_{m:m}-X_{1:m}]}{m\mu}
		$
		&
		$
		IG_{m;\mathrm{min}}
		=
		\frac{\mathbb{E}[X_1-X_{1:m}]}{m\mu},
		$
		
		$
		IG_{m;\mathrm{max}}
		=
		\frac{\mathbb{E}[X_{m:m}-X_1]}{m\mu}
		$
		\\
		
		Reference &
		Difference between the sample maximum and minimum &
		Difference between an observation and the sample minimum (lower) or maximum (upper)
		\\
		
		Population coefficients &
		One coefficient for each $m$ &
		Two coefficients for each $m$
		\\
		
		Estimator structure &
		Single estimator &
		A family of estimators indexed by $i$
		\\
		
		Decomposition &
		Aggregate measure &
		$
		IG_m
		=
		IG_{m;\mathrm{min}}
		+
		IG_{m;\mathrm{max}}
		$
		\\
		
		Interpretation &
		Overall inequality within a sample of size $m$ &
		Inequality relative to the lower and upper sample extremes
		\\
		
		\bottomrule
	\end{tabular}
}
\end{table}

Setting $m = 2$ such that $X_1 - \min\{X_1, X_2\}>0$ in \eqref{extended-Gini}, the extended lower Gini index recovers the standard Gini index \citep{Gini1936}.
\begin{align}\label{Gini coefficient}
	G\equiv IG_{2;\text{min}}= {\mathbb{E}\vert X_1-X_2\vert\over 2\mu}.
\end{align}
In a similar way, by taking $m =2$ such that $\max\{X_1,X_2\}-X_1>0$  in \eqref{extended-Gini-1}, the extended upper Gini index reduces to the standard Gini coefficient, that is, $G=IG_{2;\text{max}}$.

Propositions \ref{prop-1}-\ref{ext-gini-index-2} below provide fundamental properties and characterizations of the extended lower and upper Gini indices.

	\begin{proposition}\label{prop-1}
		For any $m\geqslant 2$, the expectations $\mathbb{E}[\min\{X_1,\ldots,X_m\}]$ and $\mathbb{E}[\max\{X_1,\ldots,X_m\}]$ exist. Consequently, the indices $IG_{m;\text{min}}$ and $IG_{m;\text{max}}$ defined in \eqref{extended-Gini} and \eqref{extended-Gini-1} are well defined.
	\end{proposition}
	\begin{proof}
	Since $F^m$ is the distribution function of $\max\{X_1,\ldots,X_m\}$, it follows for any $m\geqslant 2$ that
	\begin{align}\label{ineq-exp}
%		0\leqslant
		\mathbb{E}[\min\{X_1,\ldots,X_m\}]\leqslant\mathbb{E}[\max\{X_1,\ldots,X_m\}]
		=
		\int_0^\infty x mF^{m-1}(x) {\rm d}F(x)
		\leqslant
		m
		\int_0^\infty x{\rm d}F(x)
		=
		m\mu.
	\end{align}
	The result follows immediately.
	\end{proof}
	
	\begin{proposition}\label{prop-init}
		For all $m\geqslant 2$ the extended Gini indices satisfy: $0\leqslant IG_{m;\text{min}}<1$ and $0\leqslant IG_{m;\text{max}}<1$.
	\end{proposition}
	\begin{proof}
The conclusion follows directly from definitions \eqref{extended-Gini}-\eqref{extended-Gini-1} and the bounds in \eqref{ineq-exp}.
	\end{proof}

	\begin{proposition}\label{prop-invariance}
		The extended Gini indices \eqref{extended-Gini} and \eqref{extended-Gini-1} satisfy:
		\begin{enumerate}
			\item \textit{Ratio-scale invariance:} For any $b>0$,
			\[
			IG_{m;\text{min}}(bX) = IG_{m;\text{min}}(X),
			\quad
			IG_{m;\text{max}}(bX) = IG_{m;\text{max}}(X),
			\]
			for all $m\geqslant 2$.
			
			\item \textit{Lack of translation invariance:} For any $a>0$,
			\[
			IG_{m;\text{min}}(a+X)
			=
			\frac{\mu}{a+\mu}\,IG_{m;\text{min}}(X),
			\quad
			IG_{m;\text{max}}(a+X)
			=
			\frac{\mu}{a+\mu}\, IG_{m;\text{max}}(X),
			\]
			for all $m\geqslant 2$,
			where $\mu=\mathbb{E}[X]$.
		\end{enumerate}
	\end{proposition}
	\begin{proof}
The proof is an immediate consequence of definitions \eqref{extended-Gini} and \eqref{extended-Gini-1}, and the proof is omitted.
	\end{proof}

\begin{proposition}\label{prop-using}
	Let $m\geqslant 2$. The extended Gini indices defined in
	\eqref{extended-Gini} and \eqref{extended-Gini-1} admit the following
	covariance representations:
	\[
	IG_{m;\text{min}}
	=
	\frac{1}{\mu}\,
	\mathrm{Cov}\!\left(
	X,\,
	1-(1-F(X))^{m-1}
	\right),
	\quad 
	IG_{m;\text{max}}
	=
	\frac{1}{\mu}\,
	\mathrm{Cov}\!\left(
	X,\,
	F^{m-1}(X)-1
	\right),
	\]
	where $F$ denotes the distribution function of $X$ and
	$\mu=\mathbb{E}[X]$.
\end{proposition}
\begin{proof}
	Since $\mathbb{E}\!\left[(1-F(X))^{m-1}\right]=1/m$, it follows that
	\[
	\frac{1}{\mu}\,
	\mathrm{Cov}\!\left(
	X,\,1-(1-F(X))^{m-1}
	\right)
	=
	\frac{1}{m\mu}\,
	\mathbb{E}\!\left[
	F^{-1}(U)\{1-m(1-U)^{m-1}\}
	\right],
	\]
	where $U\sim {U}(0,1)$.  
	The first identity then follows by combining this expression with
	\[
	\mathbb{E}[\min\{X_1,\ldots,X_m\}]
	=
	m\,\mathbb{E}[F^{-1}(U)(1-U)^{m-1}]
	\]
	and the definition \eqref{extended-Gini}.
	
	Similarly, using $\mathbb{E}[F^{m-1}(X)]=1/m$, we obtain
	\[
	\frac{1}{\mu}\,
	\mathrm{Cov}\!\left(
	X,\,F^{m-1}(X)-1
	\right)
	=
	\frac{1}{m\mu}\,
	\mathbb{E}\!\left[
	F^{-1}(U)\{mU^{m-1}-1\}
	\right],
	\]
	with $U\sim {U}(0,1)$.  
	The second identity follows from the relation
	\[
	\mathbb{E}[\max\{X_1,\ldots,X_m\}]
	=
	m\,\mathbb{E}[F^{-1}(U)U^{m-1}]
	\]
	together with the definition \eqref{extended-Gini-1}.
\end{proof}

\begin{remark}
	Summing the lower and upper components yields the $m$th Gini index introduced by \citet{Gavilan-Ruiz2024}:
	\[
	IG_m
	=
	IG_{m;\text{min}}+IG_{m;\text{max}}
	=
	\frac{1}{\mu}\,
	\mathrm{Cov}\!\left(
	X,\,
	F_{X_{m-1:m-1}}(X)+F_{X_{1:m-1}}(X)
	\right),
	\]
	where $F_{X_{m-1:m-1}}(x)=F^{m-1}(x)$ and $F_{X_{1:m-1}}(x)=1-[1-F(x)]^{m-1}$ denote the distribution functions of the order statistics $X_{m-1:m-1}$ and $X_{1:m-1}$, respectively.
	
	For $m=2$, this expression reduces to the classical Gini coefficient \eqref{Gini coefficient} \cite[see][]{Yin2024}:
	\[
	G
	=
	IG_{2;\text{min}}+IG_{2;\text{max}}
	=
{1\over\mu}
	\mathrm{Cov}(X,2F(X)-1)
	=
	{1\over\mu}
	\int_0^1 F^{-1}(p)(2p-1){\rm d}p
	=
	\frac{R_G(F)}{\mu},
	\]
	where
	$
	R_G(F)
	\equiv
	\mathbb{E}|X_1-X_2|/2
	$
	is the Gini mean difference.
\end{remark}

	The result below establishes a connection between the extended Gini indices \eqref{extended-Gini} and \eqref{extended-Gini-1} and Lorenz measures of inequality.	
	\begin{proposition}\label{charact}
		Let $m\geqslant 2$. 
	The extended Gini indices \eqref{extended-Gini} and \eqref{extended-Gini-1} can be written as
		\begin{align*}
IG_{m;\text{min}}
=
{1\over m}\, G_m(F),
		\quad
		IG_{m;\text{max}}
=
\left(1-{1\over m}\right)
D_{m-1}(F),
		\end{align*}
		where 
		$L(p)={\int_0^p F^{-1}(t){\rm d}t/\mu}$, for any $0\leqslant p\leqslant 1$,
is the Lorenz curve for $X$, 
		\begin{align*}
			D_n\equiv D_n(F)=(n+1)\mathbb{E}[\{U-L(U)\}U^{n-1}], 
			\quad U\sim U(0,1), \quad n\geqslant 1,
		\end{align*}
		is the Lorenz measure of inequality introduced in \cite{Aaberge2000}, and
				\begin{align*}
			G_n
			\equiv
			G_n(F)
			=
			n(n-1)\mathbb{E}[\{U-L(U)\}(1-U)^{n-2}], 
			\quad U\sim U(0,1), \quad n\geqslant 1,
		\end{align*}
		is the generalized Gini measure introduced in \cite{Kakwani1980,Donaldson1980,Yitzhaki1983}.
	\end{proposition}
		\begin{proof}
Since $L'(p)=F^{-1}(p)/\mu$ for $0<p<1$, it follows from Proposition
\ref{prop-using} that
\[
IG_{m;\text{min}}
=
\frac{1}{m}\,
\mathbb{E}\!\left[
\{L'(U)-1\}\{1-m(1-U)^{m-1}\}
\right],
\quad
IG_{m;\text{max}}
=
\frac{1}{m}\,
\mathbb{E}\!\left[
\{L'(U)-1\}\{mU^{m-1}-1\}
\right],
\]
where $U\sim {U}(0,1)$.  
Applying integration by parts, these expressions can be rewritten as
\[
IG_{m;\text{min}}
=
\frac{1}{m}\,
\mathbb{E}\!\left[
\{U-L(U)\}\, m(m-1)(1-U)^{m-2}
\right],
\quad 
IG_{m;\text{max}}
=
\frac{1}{m}\,
\mathbb{E}\!\left[
\{U-L(U)\}\, m(m-1)U^{m-2}
\right],
\]
respectively.

Finally, the desired result follows by using the definitions of
$D_n(F)$ and $G_n(F)$.
	\end{proof}
	
	\begin{remark}		
Summing the lower and upper components yields
		\begin{align*}
			IG_m
			=
			IG_{m;\text{min}}+IG_{m;\text{max}}
			=
			\left(1-{1\over m}\right) 
			D_{m-1}(F)
			+
			{1\over m}\, G_m(F),
			\quad 
			{m\geqslant 2}.
		\end{align*}
The above expression for $IG_m$ has previously appeared in \cite{Gavilan-Ruiz2024}.
	\end{remark}	
	
	\begin{proposition}\label{pro-eq}
Let $X$ be a non-negative random variable with mean $\mu>0$. For any integer $m\geqslant 2$  there exist constants $r_m,s_m>0$ such that
\[
IG_{m;\text{min}}(X)=G(X+r_m),
\quad
IG_{m;\text{max}}(X)=G(X+s_m),
\]
where $G$ denotes the classical Gini coefficient \eqref{Gini coefficient}.
	\end{proposition}
	\begin{proof}
Let $t \equiv IG_{m;\text{min}}(X)$ and define $f(r)=G(X+r)$ for $r\geqslant 0$. Since $f$ is continuous and strictly decreasing, with
\[
f(\infty)=0<t\leqslant IG_m(X)\leqslant G(X)=f(0^+),
\]
the intermediate value theorem ensures the existence of $r_m\in(0,\infty)$ such that $f(r_m)=t$.
The argument for the remaining identity is analogous, which completes the proof.
	\end{proof}

%	\begin{remark}
%		Since the Gini coefficient $G$ is not translation invariant (see Item 2 of Proposition \ref{prop-invariance}, with $m=2, j=1$ and $k=2$), by using Proposition \ref{pro-eq}, we have
%		$
%		IG_m(j,k)(X) = [{\mu}/({r_m + \mu})] G(X),
%		$
%		$1\leqslant j < k\leqslant m$,
%		from which we get
%		\begin{align*}
%			r_m=\mu\left[{G(X)\over IG_m(j,k)(X) }-1\right]\geqslant 0.
%		\end{align*}
%		It is clear that when $m=2, j=1$ and $k=2$, we get $r_m=0$.
%	\end{remark}
	
	\begin{remark}
Proposition \ref{pro-eq} shows that the indices \eqref{extended-Gini}–\eqref{extended-Gini-1} are themselves genuine Gini coefficients. Moreover, the extended lower Gini index $IG_{m;\text{min}}$ (and similarly $IG_{m;\text{max}}$) can be represented as the classical Gini coefficient applied to the shifted variable $X+r_m$, with $r_m$ determining the magnitude of the shift.
	\end{remark}

\begin{proposition}\label{ext-gini-index-0}
The extended lower Gini index \eqref{extended-Gini} can be written as
{%\small
\begin{align*}
	IG_{m;\text{min}}
	=
	\dfrac{		\displaystyle 
\int_0^\infty 
\left[1-\mathbb{P}\left(X\leqslant t\right)\right]
{\rm d}t
		-	
	\int_0^\infty 
\{1-\mathbb{P}\left(X\leqslant t\right)\}^m
{\rm d}t
		}{
		\displaystyle
		m\int_0^\infty 
		\left[1-\mathbb{P}\left(X\leqslant t\right)\right]
		{\rm d}t
		}.
\end{align*}
}
\end{proposition}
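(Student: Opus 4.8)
The plan is to reduce both the numerator and denominator to expectations of concrete random variables via the tail (survival-function) representation of the mean for non-negative random variables. The key preliminary observation is that the claimed right-hand side carries no dependence on $i$; this is consistent with the definition \eqref{extended-Gini} precisely because each $X_i$ has the same distribution as $X$, so that $\mathbb{E}[X_i]=\mu$ irrespective of the index.

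First I would apply linearity of expectation to split the numerator of \eqref{extended-Gini},
\[
\mathbb{E}[X_i-\min\{X_1,\ldots,X_m\}]
=\mathbb{E}[X_i]-\mathbb{E}[\min\{X_1,\ldots,X_m\}].
\]
Since $X_i$ has the same distribution as $X$ and $X\geqslant 0$, I would invoke the standard tail identity $\mathbb{E}[X_i]=\int_0^\infty \mathbb{P}(X>t)\,{\rm d}t=\int_0^\infty\left[1-\mathbb{P}(X\leqslant t)\right]{\rm d}t$, which at the same time rewrites the mean $\mu$ appearing in the denominator $m\mu$. Next, to handle the minimum, I would use independence to compute its survival function, $\mathbb{P}(\min\{X_1,\ldots,X_m\}>t)=\prod_{j=1}^m\mathbb{P}(X_j>t)=\left[1-\mathbb{P}(X\leqslant t)\right]^m$, and apply the tail formula once more to obtain $\mathbb{E}[\min\{X_1,\ldots,X_m\}]=\int_0^\infty\left[1-\mathbb{P}(X\leqslant t)\right]^m{\rm d}t$. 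Substituting these three integral expressions into $\,_iIG_m=\mathbb{E}[X_i-\min]/(m\mu)$ yields the stated formula directly.

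I do not anticipate a genuine obstacle: the argument is essentially linearity of expectation combined with the layer-cake representation of the mean, which is valid because $X$ is non-negative with finite mean $\mu>0$. The only points requiring minor care are the justification of the tail formula (an application of Tonelli's theorem to the non-negative integrand) and the remark that finiteness of $\mu$ guarantees convergence of all integrals involved, in particular of $\int_0^\infty[1-\mathbb{P}(X\leqslant t)]^m\,{\rm d}t\leqslant\int_0^\infty[1-\mathbb{P}(X\leqslant t)]\,{\rm d}t=\mu<\infty$.
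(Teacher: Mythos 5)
Your proposal is correct and takes essentially the same approach as the paper: the paper's proof writes $\min\{X_1,\ldots,X_m\}=\int_0^\infty \mathds{1}_{\bigcap_{k}\{X_k\geqslant t\}}\,{\rm d}t$ and applies Tonelli's theorem, which is exactly the tail (layer-cake) formula you invoke, and both arguments then use independence to obtain $\{1-\mathbb{P}(X\leqslant t)\}^m$ together with the identity $\mu=\int_0^\infty[1-\mathbb{P}(X\leqslant t)]\,{\rm d}t$. Your explicit remark on the convergence of the integrals is a small but welcome addition not spelled out in the paper.
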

\begin{proof}
Using the identity
\begin{align}\label{id-minimum-0}
%	X_{1:m}
%	=
	\min\{X_1,\ldots,X_m\}
	=
	\int_0^\infty 
	\mathds{1}_{\bigcap_
		{i=1}^m\{X_i\geqslant t\}}
	{\rm d}t,
\end{align}
we have
\begin{align}\label{exp-min}
	\mathbb{E}[X_{1}-\min\{X_1,\ldots,X_m\}]
	&=
	\mu-
	\int_0^\infty 
	\mathbb{P}\left(\bigcap_
	{i=1}^m\{X_i\geqslant t\}\right)
	{\rm d}t
	=
		\mu-
	\int_0^\infty 
	\{1-\mathbb{P}\left(X\leqslant t\right)\}^m
	{\rm d}t,
\end{align}
where Tonelli's Theorem permits the change in integration order, and the final step follows from the independence and identical distribution nature of $X_1, X_2, \ldots, X_m$.

Combining \eqref{exp-min}, the well-known identity  $\mu = \int_0^\infty [1 - F(t)] \, {\rm d}t$, and the extended upper Gini index definition \eqref{extended-Gini-1} yields the result.
\end{proof}

\begin{proposition}\label{ext-gini-index-1}
	The extended upper Gini index \eqref{extended-Gini-1} takes the form:
		\begin{align*}
			IG_{m;\text{max}}
			=
			\dfrac{\displaystyle 
				\int_0^\infty 
				\left[1-\{\mathbb{P}\left(X\leqslant t\right)\}^m\right]
				{\rm d}t
				-	
				\int_0^\infty 
				\left[1-\mathbb{P}\left(X\leqslant t\right)\right]
				}{
				\displaystyle
				m\int_0^\infty 
				\left[1-\mathbb{P}\left(X\leqslant t\right)\right]
				{\rm d}t}.
		\end{align*}
\end{proposition}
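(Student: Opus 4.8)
The plan is to mirror the proof of Proposition \ref{ext-gini-index-0}, replacing the minimum identity \eqref{id-minimum-0} by its counterpart for the maximum. First I would record the layer-cake identity
\begin{align*}
	\max\{X_1,\ldots,X_m\}
	=
	\int_0^\infty
	\mathds{1}_{\bigcup_{i=1}^m\{X_i\geqslant t\}}
	{\rm d}t,
\end{align*}
which holds because, for a non-negative variable, the event $\{\max_i X_i\geqslant t\}$ coincides with $\bigcup_{i=1}^m\{X_i\geqslant t\}$. Taking expectations and invoking Tonelli's theorem to interchange $\mathbb{E}$ and $\int_0^\infty$ (legitimate since the integrand is non-negative and measurable) yields
\begin{align*}
	\mathbb{E}[\max\{X_1,\ldots,X_m\}]
	=
	\int_0^\infty
	\mathbb{P}\left(\bigcup_{i=1}^m\{X_i\geqslant t\}\right)
	{\rm d}t.
\end{align*}

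The key algebraic step is to pass to complements via De Morgan's law and then exploit the iid structure of $X_1,\ldots,X_m$:
\begin{align*}
	\mathbb{P}\left(\bigcup_{i=1}^m\{X_i\geqslant t\}\right)
	=
	1-\prod_{i=1}^m\mathbb{P}\left(X_i< t\right)
	=
	1-\{\mathbb{P}\left(X\leqslant t\right)\}^m,
\end{align*}
where the final equality uses that $X$ is (absolutely) continuous, so $\mathbb{P}(X<t)=\mathbb{P}(X\leqslant t)$. Substituting back gives $\mathbb{E}[\max\{X_1,\ldots,X_m\}]=\int_0^\infty[1-\{\mathbb{P}(X\leqslant t)\}^m]\,{\rm d}t$. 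Combining this with the identity $\mu=\int_0^\infty[1-\mathbb{P}(X\leqslant t)]\,{\rm d}t$ and $\mathbb{E}(X_i)=\mu$, I would write
\begin{align*}
	\mathbb{E}[\max\{X_1,\ldots,X_m\}-X_i]
	=
	\int_0^\infty\big[1-\{\mathbb{P}(X\leqslant t)\}^m\big]{\rm d}t
	-
	\int_0^\infty\big[1-\mathbb{P}(X\leqslant t)\big]{\rm d}t,
\end{align*}
and finally divide by $m\mu$ as in definition \eqref{extended-Gini-1} to reach the claimed formula.

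Because the argument is dual to Proposition \ref{ext-gini-index-0}, I anticipate no serious obstacle. The only points that deserve care are the De Morgan passage to complements, so that a \emph{union} of upper-tail events is turned into one minus a \emph{product} of lower-tail probabilities (this is precisely where the proof diverges from the minimum case, which involved an intersection), and the continuity of $X$ ensuring $\mathbb{P}(X<t)=\mathbb{P}(X\leqslant t)$, which is harmless for the gamma model motivating the paper. The interchanges of expectation and integration are justified throughout by Tonelli's theorem, exactly as in the preceding proposition.
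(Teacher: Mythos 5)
Your proof is correct and takes essentially the same approach as the paper's: a layer-cake representation of $\max\{X_1,\ldots,X_m\}$, Tonelli's theorem to pass to probabilities, the iid factorization, and then the identity $\mu=\int_0^\infty\left[1-\mathbb{P}(X\leqslant t)\right]{\rm d}t$ together with definition \eqref{extended-Gini-1}. The only (cosmetic) difference is that the paper writes $\max\{X_1,\ldots,X_m\}=\int_0^\infty\bigl[1-\mathds{1}_{\bigcap_{i=1}^m\{X_i\leqslant t\}}\bigr]{\rm d}t$ and factors $\mathbb{P}\bigl(\bigcap_{i=1}^m\{X_i\leqslant t\}\bigr)$ directly, so it never needs your De Morgan passage through strict inequalities and the attendant continuity step $\mathbb{P}(X<t)=\mathbb{P}(X\leqslant t)$ — which is in any case harmless here, since a distribution function has at most countably many jumps and the integrals are unaffected.
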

\begin{proof}
By applying the identity
\begin{align}\label{id-maximum}
%	X_{m:m}
%	=
	\max\{X_1,\ldots,X_m\}
	=
	\int_0^\infty 
	\left[1-\mathds{1}_{\bigcap_
		{i=1}^m\{X_i\leqslant t\}}\right]
	{\rm d}t,
\end{align}
we obtain
\begin{align}\label{exp-min-0}
	\mathbb{E}[\max\{X_1,\ldots,X_m\}-X_{i}]
	&=
	\int_0^\infty 
	\left[1-\mathbb{P}\left(\bigcap_
		{i=1}^m\{X_i\leqslant t\}\right)\right]
		{\rm d}t
		-	
	\mu
%		\nonumber
%		\\[0,2cm]		
%		&
		=
		\int_0^\infty 
		\left[1-\{\mathbb{P}\left(X\leqslant t\right)\}^m\right]
		{\rm d}t
		-	
		\mu,
\end{align}
where Tonelli's Theorem justifies interchanging the integration order, and with the final step resulting from the independence and identical distribution of  $X_1, X_2, \ldots, X_m$.

Then, the result follows by combining \eqref{exp-min-0}, the well-known identity  $\mu = \int_0^\infty [1 - F(t)] \, {\rm d}t$ and the extended upper Gini index definition \eqref{extended-Gini-1}.
\end{proof}

\begin{proposition}\label{ext-gini-index}
	The extended lower Gini index for $X\sim \text{Gamma}(\alpha,\lambda)$ (gamma distribution) is given by
	\begin{align}\label{mthgini_gamma}
		IG_{m;\text{min}}
		=
		{1\over m}
		\left[
		1
			-	
					{1\over\alpha}
			\int_0^\infty 
			\left\{1-{\gamma(\alpha, t)\over\Gamma(\alpha)}\right\}^m
			{\rm d}t
		\right].
	\end{align}
\end{proposition}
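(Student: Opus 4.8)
The plan is to specialize the general integral representation already established in Proposition~\ref{ext-gini-index-0} to the gamma law, and then to exploit the scale-invariance of the index to eliminate the rate parameter. First I would record the two distributional facts needed: for $X\sim\text{Gamma}(\alpha,\lambda)$ with $\lambda$ the rate parameter, the cumulative distribution function is $\mathbb{P}(X\leqslant t)=\gamma(\alpha,\lambda t)/\Gamma(\alpha)$, where $\gamma$ denotes the lower incomplete gamma function, and the mean is $\mu=\alpha/\lambda$. In particular, the identity $\int_0^\infty[1-\mathbb{P}(X\leqslant t)]\,{\rm d}t=\mu=\alpha/\lambda$ immediately evaluates the first integral appearing in both the numerator and denominator of Proposition~\ref{ext-gini-index-0}.

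Next I would treat the remaining integral $\int_0^\infty\{1-\mathbb{P}(X\leqslant t)\}^m\,{\rm d}t=\int_0^\infty\{1-\gamma(\alpha,\lambda t)/\Gamma(\alpha)\}^m\,{\rm d}t$. The key manipulation is the change of variables $u=\lambda t$, which rewrites this quantity as $\tfrac{1}{\lambda}\int_0^\infty\{1-\gamma(\alpha,u)/\Gamma(\alpha)\}^m\,{\rm d}u$, thereby isolating a factor $1/\lambda$ and producing a $\lambda$-free integrand that matches the one in \eqref{mthgini_gamma}.

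Finally I would substitute these two evaluations into the quotient of Proposition~\ref{ext-gini-index-0}. The numerator becomes $\tfrac{\alpha}{\lambda}-\tfrac{1}{\lambda}\int_0^\infty\{1-\gamma(\alpha,u)/\Gamma(\alpha)\}^m\,{\rm d}u$ and the denominator becomes $m\alpha/\lambda$; the common factor $1/\lambda$ cancels, and after relabelling the dummy variable $u$ as $t$ this yields exactly \eqref{mthgini_gamma}.

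As for the main obstacle, there is no genuine difficulty here: the argument is a direct computation once the representation of Proposition~\ref{ext-gini-index-0} is in hand. The only point requiring care is the bookkeeping of the argument of the incomplete gamma function under the rescaling $u=\lambda t$, together with the verification that the $1/\lambda$ factors cancel identically. This cancellation is precisely the analytic manifestation of the scale-invariance of the Gini-type index, so that the final expression depends on the shape parameter $\alpha$ alone and not on the rate $\lambda$.
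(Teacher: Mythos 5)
Your proposal is correct and follows exactly the route the paper takes: the paper's proof of Proposition~\ref{ext-gini-index} is simply the statement that it ``follows directly from Proposition~\ref{ext-gini-index-0},'' and you have supplied precisely the omitted details (the gamma CDF $\gamma(\alpha,\lambda t)/\Gamma(\alpha)$, the mean $\mu=\alpha/\lambda$, the substitution $u=\lambda t$, and the cancellation of the $1/\lambda$ factors reflecting scale invariance).
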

\begin{proof}
The result follows directly from the Proposition \ref{ext-gini-index-0} and is thus omitted.
\end{proof}

\begin{proposition}\label{ext-gini-index-2}
	The extended upper Gini index for $X\sim \text{Gamma}(\alpha,\lambda)$ is given by
	\begin{align}\label{mthgini_gamma-2}
		IG_{m;\text{max}}
		&=
		{1\over m}
		\left[
		{1\over\alpha}
		\int_0^\infty 
		\left\{1-{\gamma^m(\alpha, t)\over\Gamma^m(\alpha) }\right\}
		{\rm d}t
		-
				1
		\right].
	\end{align}
\end{proposition}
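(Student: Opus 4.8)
The plan is to derive \eqref{mthgini_gamma-2} as a direct specialization of Proposition~\ref{ext-gini-index-1}, exactly mirroring the way Proposition~\ref{ext-gini-index} specializes Proposition~\ref{ext-gini-index-0} (and whose proof was therefore omitted). The starting point is the distribution-free representation
\[
\,^iIG_m = \frac{\displaystyle\int_0^\infty \left[1-\{F(t)\}^m\right]{\rm d}t - \int_0^\infty \left[1-F(t)\right]{\rm d}t}{\displaystyle m\int_0^\infty \left[1-F(t)\right]{\rm d}t},
\]
into which I would insert the gamma cumulative distribution function together with its mean. Two ingredients are needed: for $X\sim\text{Gamma}(\alpha,\lambda)$ the cumulative distribution function is $F(t)=\gamma(\alpha,\lambda t)/\Gamma(\alpha)$, and the mean is $\mu=\alpha/\lambda=\int_0^\infty[1-F(t)]\,{\rm d}t$, so the denominator is immediately $m\mu=m\alpha/\lambda$.

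Next I would dispose of the rate parameter $\lambda$, which does not appear in the claimed formula. Performing the change of variables $u=\lambda t$ in both integrals converts $F(t)$ into $\gamma(\alpha,u)/\Gamma(\alpha)$ and extracts a factor $1/\lambda$ that cancels the $1/\lambda$ carried by $\mu$ in the denominator; this cancellation is the analytic manifestation of the scale invariance of the index. After the cancellation the numerator becomes $\int_0^\infty\left[1-\gamma^m(\alpha,u)/\Gamma^m(\alpha)\right]{\rm d}u-\alpha$ while the denominator is $m\alpha$, and factoring out $1/(m\alpha)$ produces \eqref{mthgini_gamma-2}.

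The only genuinely nonroutine point is the convergence of $\int_0^\infty\left[1-\gamma^m(\alpha,u)/\Gamma^m(\alpha)\right]{\rm d}u$: unlike $\int_0^\infty[1-F(u)]\,{\rm d}u$, whose integrand is a survival function and is plainly integrable, the integrand $1-F^m$ must be controlled via the factorization $1-F^m=(1-F)\left(1+F+\cdots+F^{m-1}\right)$ and the bound $1+F+\cdots+F^{m-1}\leqslant m$, giving $1-F^m\leqslant m\,(1-F)$, which is integrable. Granting this estimate—which also underwrites the Tonelli interchange already invoked in Proposition~\ref{ext-gini-index-1}—the remainder is the routine substitution described above, and the result follows.
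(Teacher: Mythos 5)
Your proof is correct and follows exactly the route the paper intends: the paper's own proof of Proposition~\ref{ext-gini-index-2} is omitted as an immediate specialization of Proposition~\ref{ext-gini-index-1} to the gamma case, which is precisely what you carry out by inserting $F(t)=\gamma(\alpha,\lambda t)/\Gamma(\alpha)$ and $\mu=\alpha/\lambda$ and rescaling. The details you add beyond the paper --- the substitution $u=\lambda t$ that makes the scale invariance explicit, and the bound $1-F^m\leqslant m\,(1-F)$ guaranteeing convergence of the integral --- are correct and fill in what the paper leaves implicit.
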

\begin{proof}
The proof follows immediately from Proposition \ref{ext-gini-index-1} and is omitted for brevity
\end{proof}

\begin{remark}
	Except for $m = 2$ \citep[see Remark 2.6 of][]{Vila2025}, the integrals in Propositions \ref{ext-gini-index} and \ref{ext-gini-index-2} lack closed-form expressions in terms of standard mathematical functions, necessitating numerical integration methods
\end{remark}

\section{Unbiasedness of extended Gini index estimators}\label{sec:03}

%This section focuses on obtaining explicit formulas for the expected values of {extended lower Gini index estimator},  $\widehat{\,_iIG}_{m;\text{min}}$, and {extended upper Gini index estimator}, $\widehat{\,^iIG}_{m;\text{max}}$, for $i=1,\ldots,m$, which are defined as follows:
{
This section focuses on deriving explicit expressions for the expected values of the extended lower and upper Gini index estimators,
$\widehat{\,_iIG}_{m;\mathrm{min}}$ and
$\widehat{\,^iIG}_{m;\mathrm{max}}$,
for $i=1,\ldots,m$. In addition, we establish their consistency and asymptotic normality under mild regularity conditions. The estimators are defined as follows:
}
\begin{align}\label{estimator}
	\widehat{\,_iIG}_{m;\text{min}}
	=
	{(m-1)!\over (n-1)(n-2)\cdots(n-m+1)} \,
	\dfrac{\displaystyle
		\sum_{1\leqslant j_1<\cdots< j_m\leqslant n}
		\left[
		X_{j_i}
		-
		\min\{X_{j_1},\ldots,X_{j_m}\}
		\right]
	}{\displaystyle \sum_{k=1}^{n}X_k}
\end{align}
and
\begin{align}\label{estimator-1}
	\widehat{\,^iIG}_{m;\text{max}}
	=
	{(m-1)!\over (n-1)(n-2)\cdots(n-m+1)} \,
	\dfrac{\displaystyle
		\sum_{1\leqslant j_1<\cdots< j_m\leqslant n}
		\left[
		\max\{X_{j_1},\ldots,X_{j_m}\}
		-
		X_{j_i}
		\right]
	}{\displaystyle \sum_{k=1}^{n}X_k},
\end{align}
respectively,
where $X_{i_1}, X_{i_2},\ldots, X_{i_m}$ are
iid observations  of $X$. 

{
\begin{remark}
	The population coefficients $IG_{m;\mathrm{min}}$ and $IG_{m;\mathrm{max}}$, defined in \eqref{extended-Gini} and \eqref{extended-Gini-1}, respectively, are independent of the particular choice of $X_1$ in the numerator. Indeed, by exchangeability,
	\[
	\mathbb{E}\left[X_1-\min\{X_1,\ldots,X_m\}\right]
	=
	\mathbb{E}\left[X_i-\min\{X_1,\ldots,X_m\}\right],
	\quad i=1,\ldots,m,
	\]
	and
	\[
	\mathbb{E}\left[\max\{X_1,\ldots,X_m\}-X_1\right]
	=
	\mathbb{E}\left[\max\{X_1,\ldots,X_m\}-X_i\right],
	\quad i=1,\ldots,m.
	\]
	In contrast, the estimators $\widehat{\,_iIG}_{m;\mathrm{min}}$ and $\widehat{\,^iIG}_{m;\mathrm{max}}$ depend on $i$, so different choices of $i$ generally produce different finite-sample estimates of the same population coefficients.
\end{remark}
}

\begin{remark}\label{rem-gini-index}
	Note that
	\begin{align*}
			\widehat{IG}_m
			&\equiv
	\widehat{\,_iIG}_{m;\text{min}}+\widehat{\,^iIG}_{m;\text{max}}
	\\[0,2cm]
	&=
		{(m-1)!\over (n-1)(n-2)\cdots(n-m+1)} \,
	\dfrac{\displaystyle
		\sum_{1\leqslant j_1<\cdots< j_m\leqslant n}
		\left[
		\max\{X_{j_1},\ldots,X_{j_m}\}
		-
		\min\{X_{j_1},\ldots,X_{j_m}\}
		\right]
	}{\displaystyle \sum_{i=1}^{n}X_i},
	\end{align*}
	where $	\widehat{IG}_m$ is the $m$th Gini index estimator proposed in \cite{Vila2025}.
\end{remark}

{
\begin{proposition}[Consistency]
	Let $X_1,X_2,\ldots$ be i.i.d. random variables with mean $\mu>0$.
	Then, for each fixed $m\geqslant 2$ and $i=1,\ldots,m$,
	\[
	\widehat{\,_iIG}_{m;\mathrm{min}}
	\overset{\mathrm{a.s.}}{\longrightarrow}
	IG_{m;\mathrm{min}}
	\quad\text{and}\quad
	\widehat{\,^iIG}_{m;\mathrm{max}}
	\overset{\mathrm{a.s.}}{\longrightarrow}
	IG_{m;\mathrm{max}},
	\]
	as $n\to\infty$, where $\overset{\mathrm{a.s.}}{\longrightarrow}$ denotes almost sure convergence.
\end{proposition}
\begin{proof}
	Note that the numerators of
	$\widehat{\,_iIG}_{m;\mathrm{min}}$
	and
	$\widehat{\,^iIG}_{m;\mathrm{max}}$
	are U-statistics of degree $m$ with kernels
	\[
	h_i(x_1,\ldots,x_m)
	\equiv 
	x_i-\min\{x_1,\ldots,x_m\}
	\quad\text{and}\quad
	g_i(x_1,\ldots,x_m)
	\equiv
	\max\{x_1,\ldots,x_m\}-x_i,
	\]
	respectively. By the strong law of large numbers for U-statistics
	\citep[Theorem~3.1.1]{Lee1990},
	their numerators converge almost surely to the corresponding expectations. Moreover,
	$
	(1/n)\sum_{k=1}^{n}X_k
	\overset{\mathrm{a.s.}}{\longrightarrow}
	\mu
	$
	by the classical strong law of large numbers. The result follows from the continuous mapping theorem.
\end{proof}

\begin{proposition}[Asymptotic normality]
	Assume that $\mathbb E[X^2]<\infty$.
	Then, for each fixed $m\geqslant 2$ and $i=1,\ldots,m$,
	\[
	\sqrt n
	\big(
	\widehat{\,_iIG}_{m;\mathrm{min}}
	-
	IG_{m;\mathrm{min}}
	\big)
	\overset{d}{\longrightarrow}
	N(0,\sigma_{i,\mathrm{min}}^2)
	\quad 
	\text{and}
	\quad
	\sqrt n
	\big(
	\widehat{\,^iIG}_{m;\mathrm{max}}
	-
	IG_{m;\mathrm{max}}
	\big)
	\overset{d}{\longrightarrow}
	N(0,\sigma_{i,\mathrm{max}}^2),
	\]
where $\overset{d}{\longrightarrow}$ denotes convergence in distribution, and
$\sigma_{i,\mathrm{min}}^2$ and
$\sigma_{i,\mathrm{max}}^2$ are finite asymptotic variances that can be derived using the Hoeffding decomposition and the multivariate delta method \citep{Hoeffding1948,Lee1990}.
\end{proposition}
\begin{proof}
	Since $\mathbb E[X^2]<\infty$, the kernels
	$h_i$ and $g_i$ have finite second moments. Therefore, by the classical asymptotic normality theorem for U-statistics
	\citep{Hoeffding1948,Lee1990},
	the numerators of
	$\widehat{\,_iIG}_{m;\mathrm{min}}$
	and
	$\widehat{\,^iIG}_{m;\mathrm{max}}$
	are asymptotically normal. Furthermore,
	$
	(1/n)\sum_{k=1}^{n}X_k
	\overset{p}{\longrightarrow}
	\mu,
	$
	and hence the stated limits follow from Slutsky's theorem.
\end{proof}
}

\begin{theorem}\label{main-theorem}
Let $X_1, X_2, \ldots, X_m$ be independent copies of a non-negative and absolutely continuous random variable $X$ with finite and positive expected value and common cumulative distribution function $F$. For each $i=1,\ldots,m$, the following statements hold:
\begin{align*}
\mathbb{E}[\widehat{\,_iIG}_{m;\text{min}}]
	&=\!
	{n\over m}\!
\left[
\int_0^\infty\!
\mathbb{E}\left[
X \exp\left(-Xz\right)
\right]
\mathscr{L}_F^{n-1}(z)
{\rm d}z\!
-\!
\int_0^\infty\!
\int_0^\infty\!
\mathbb{E}^m\left[\!
\mathds{1}_{\{X\geqslant t\}}
\exp\left(-X z\right)\!
\right]
{\rm d}t
\mathscr{L}_F^{n-m}(z)
{\rm d}z
\right],
\end{align*}
and
\begin{multline*}
	\mathbb{E}[\widehat{\,^iIG}_{m;\text{max}}]
	=
	{n\over m}
	\Bigg[	
	\int_0^\infty
	\int_0^\infty
	\left\{
	\mathscr{L}_F^{m}(z)
		-
	\mathbb{E}^m\left[
	\mathds{1}_{\{X\leqslant t\}}
	\exp\left(-X z\right)
	\right]
	\right\}
	{\rm d}t
	\mathscr{L}_F^{n-m}(z)
	{\rm d}z
	\\[0,2cm]
	-
	\int_0^\infty
	\mathbb{E}\left[
	X \exp\left(-Xz\right)
	\right]
	\mathscr{L}_F^{n-1}(z)
	{\rm d}z
		\Bigg],
\end{multline*}
where $\mathscr{L}_F(z)=\int_0^\infty \exp(-zx){\rm d}F(x)$ is the Laplace transform associated with distribution $F$. In the above, we are assuming that the expectations and improper integrals converge.
\end{theorem}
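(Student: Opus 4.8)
The plan is to reduce the expectation of each ratio to a product of one-dimensional expectations by means of the elementary identity $1/s=\int_0^\infty e^{-sz}\,{\rm d}z$ for $s>0$, applied to the denominator $S:=\sum_{k=1}^n X_k$. Writing $C:=(m-1)!/[(n-1)(n-2)\cdots(n-m+1)]$ for the leading constant, I would first treat the lower index and write
\[
\mathbb{E}[\widehat{\,_iIG}_m]
= C\,\mathbb{E}\!\left[\Big(\textstyle\sum_{j_1<\cdots<j_m}[X_{j_i}-\min\{X_{j_1},\ldots,X_{j_m}\}]\Big)\int_0^\infty e^{-Sz}\,{\rm d}z\right],
\]
then push the expectation inside the $z$-integral (this is the first place Tonelli/Fubini is invoked) and split the numerator into its two sums.

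For the first sum I would exploit the iid structure directly. For any fixed $m$-subset and any fixed position $i$, the factor $X_{j_i}$ is a single observation, so separating it from the product $e^{-Sz}=\prod_k e^{-X_k z}$ gives
\[
\mathbb{E}[X_{j_i}\,e^{-Sz}]=\mathbb{E}[X e^{-Xz}]\,\mathscr{L}_F^{\,n-1}(z),
\]
a quantity that does \emph{not} depend on which subset or which $i$ was chosen, which is precisely what makes the final answer independent of $i$. Since there are $\binom{n}{m}$ subsets, the first sum contributes $\binom{n}{m}\mathbb{E}[X e^{-Xz}]\mathscr{L}_F^{\,n-1}(z)$. (If one prefers to keep $i$ visible, the same total follows from the Vandermonde-type identity $\sum_{k}\binom{k-1}{i-1}\binom{n-k}{m-i}=\binom{n}{m}$.)

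For the second sum I would insert the min-representation \eqref{id-minimum-0}, namely $\min\{X_{j_1},\ldots,X_{j_m}\}=\int_0^\infty \mathds{1}_{\bigcap_l\{X_{j_l}\geqslant t\}}\,{\rm d}t$, interchange once more, and factorize $e^{-Sz}$ over the $m$ indices inside the subset and the $n-m$ indices outside it:
\[
\mathbb{E}\big[\mathds{1}_{\bigcap_l\{X_{j_l}\geqslant t\}}\, e^{-Sz}\big]=\mathbb{E}^m[\mathds{1}_{\{X\geqslant t\}}e^{-Xz}]\,\mathscr{L}_F^{\,n-m}(z).
\]
Again this is subset-independent, so the second sum contributes $\binom{n}{m}\int_0^\infty \mathbb{E}^m[\mathds{1}_{\{X\geqslant t\}}e^{-Xz}]\,{\rm d}t\,\mathscr{L}_F^{\,n-m}(z)$. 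Collecting the two pieces and using the algebraic simplification $C\binom{n}{m}=n/m$ yields the stated formula for $\mathbb{E}[\widehat{\,_iIG}_m]$. The upper-index formula follows by the identical recipe, replacing \eqref{id-minimum-0} with the max-representation \eqref{id-maximum} and the summand $X_{j_i}-\min\{X_{j_1},\ldots,X_{j_m}\}$ with $\max\{X_{j_1},\ldots,X_{j_m}\}-X_{j_i}$; the leading ``$1$'' in \eqref{id-maximum} produces the $\mathscr{L}_F^{m}(z)$ term inside the braces.

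I expect the genuine obstacle to be analytic rather than algebraic: justifying each interchange of the $z$-integral, the $t$-integral, and the expectation. Tonelli applies cleanly wherever the integrand is nonnegative (the min-term and the $X_{j_i}$-term), but in the max case the two pieces $\mathscr{L}_F^{m}(z)$ and $\mathbb{E}^m[\mathds{1}_{\{X\leqslant t\}}e^{-Xz}]$ are individually non-integrable in $t$ (each tends to $\mathscr{L}_F^m(z)$ as $t\to\infty$) and only their difference decays; here I would keep the two terms together, verify that the difference has an integrable tail, and appeal to dominated convergence / Fubini for the resulting signed integrand. This is also where the standing hypothesis that the expectations and improper integrals converge does the real work.
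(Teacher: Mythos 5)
Your proposal follows essentially the same route as the paper's own proof: the identity $1/s=\int_0^\infty e^{-sz}\,{\rm d}z$ applied to the denominator $\sum_{k=1}^n X_k$, Tonelli to pass the expectation through the integrals, the min/max integral representations, factorization of $e^{-Sz}$ via the iid structure into $\mathbb{E}[Xe^{-Xz}]\mathscr{L}_F^{n-1}(z)$ and $\mathbb{E}^m[\mathds{1}_{\{X\geqslant t\}}e^{-Xz}]\mathscr{L}_F^{n-m}(z)$, and the count $C\binom{n}{m}=n/m$. Your closing remarks on justifying the interchanges in the signed (max) case are in fact more careful than the paper, which simply declares the upper-index derivation ``analogous'' and relies on the standing convergence hypothesis.
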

\begin{proof}
By using the identity
$\int_{0}^\infty \exp(-w z){\rm d}z={1/ w}$, $ w>0, $
with $w=\sum_{k=1}^{n}X_k$, we get
\begin{multline}\label{exp-1}
	\mathbb{E}\left[
	\dfrac{\displaystyle
		\sum_{1\leqslant j_1<\cdots< j_m\leqslant n}
\left[
X_{j_i}
-
\min\{X_{j_1},\ldots,X_{j_m}\}
\right]
	}{\displaystyle \sum_{k=1}^{n}X_k}
	\right]
	=
	\sum_{1\leqslant j_1<\cdots< j_m\leqslant n}
	\mathbb{E}\left[X_{j_i}
	\int_0^\infty\exp\left\{-\left(\sum_{k=1}^{n}X_k\right)z\right\}{\rm d}z\right]
	\\[0,2cm]
	-
	\sum_{1\leqslant j_1<\cdots< j_m\leqslant n}
	\mathbb{E}\left[
	\min\{X_{j_1},\ldots,X_{j_m}\}
	\int_0^\infty\exp\left\{-\left(\sum_{k=1}^{n}X_k\right)z\right\}{\rm d}z\right]
		\\[0,2cm]
		=
	\sum_{1\leqslant j_1<\cdots< j_m\leqslant n} \
\int_0^\infty
\mathbb{E}\left[X_{j_i}
\exp\left\{-\left(\sum_{k=1}^{n}X_k\right)z\right\}
\right]
{\rm d}z
\\[0,2cm]
-
\sum_{1\leqslant j_1<\cdots< j_m\leqslant n} \
\int_0^\infty
\mathbb{E}\left[
\min\{X_{j_1},\ldots,X_{j_m}\}
\exp\left\{-\left(\sum_{k=1}^{n}X_k\right)z\right\}
\right]
{\rm d}z,
\end{multline}
where Tonelli's Theorem justifies the interchange of integrals.
Utilizing the identity \eqref{id-minimum-0},
the above expression \eqref{exp-1} becomes
\begin{align}\label{exp-2}
&=
\sum_{1\leqslant j_1<\cdots<j_m\leqslant n} \
\int_0^\infty
\mathbb{E}\left[
X_{j_i}
\exp\left(-X_1 z\right)
\exp\left\{-\left(\sum_{k=2}^{n}X_k\right)z\right\}
\right]
{\rm d}z
\nonumber
\\[0,2cm]
&-
\sum_{1\leqslant j_1<\cdots<j_m\leqslant n} \
\int_0^\infty
\mathbb{E}\left[
	\int_0^\infty 
\mathds{1}_{\bigcap_
	{k=1}^m\{X_{j_k}\geqslant t\}}
{\rm d}t
\exp\left\{-\left(\sum_{k=1}^{m}X_k\right)z\right\}
\exp\left\{-\left(\sum_{k=m+1}^{n}X_k\right)z\right\}
\right]
{\rm d}z
\nonumber
\\[0,2cm]
&=
\sum_{1\leqslant j_1<\cdots<j_m\leqslant n} \
\int_0^\infty
\mathbb{E}\left[
X_{j_i}
\exp\left(-X_1 z\right)
\exp\left\{-\left(\sum_{k=2}^{n}X_k\right)z\right\}
\right]
{\rm d}z
\nonumber
\\[0,2cm]
&-
\sum_{1\leqslant j_1<\cdots<j_m\leqslant n} \
\int_0^\infty
\int_0^\infty 
\mathbb{E}\left[
\mathds{1}_{\bigcap_
	{k=1}^m\{X_{j_k}\geqslant t\}}
\exp\left\{-\left(\sum_{k=1}^{m}X_k\right)z\right\}
\exp\left\{-\left(\sum_{k=m+1}^{n}X_k\right)z\right\}
\right]
{\rm d}t
{\rm d}z,
\end{align}
where Tonelli's Theorem again justifies the interchange of integrals.
As $X_1, X_2, \ldots, X_m$ are iid, the expression  \eqref{exp-2} simplifies to
\begin{align*}
	&=	
\sum_{1\leqslant j_1<\cdots<j_m\leqslant n}
	\int_0^\infty
	\mathbb{E}\left[
	X_{j_i}
	\exp\left(-X_1 z\right)
	\mathbb{E}\left[
	\exp\left\{-\left(\sum_{k=2}^{n}X_k\right)z\right\}
	\right]
	\right]
	{\rm d}z
	\\[0,2cm]
	&-
\sum_{1\leqslant j_1<\cdots<j_m\leqslant n}
	\int_0^\infty
		\int_0^\infty
	\mathbb{E}\left[
\mathds{1}_{\bigcap_
	{k=1}^m\{X_{j_k}\geqslant t\}}
	\exp\left\{-\left(\sum_{k=1}^{m}X_k\right)z\right\}
	\mathbb{E}
	\left[
	\exp\left\{-\left(\sum_{k=m+1}^{n}X_k\right)z\right\}
	\right]
	\right]
		{\rm d}t
	{\rm d}z
		\\[0,2cm]
	&=
\binom{n}{m}
\int_0^\infty
\mathbb{E}\left[
X \exp\left(-Xz\right)
\right]
\mathscr{L}_F^{n-1}(z)
{\rm d}z
-
\binom{n}{m}
\int_0^\infty
\int_0^\infty
\mathbb{E}^m\left[
\mathds{1}_{\{X\geqslant t\}}
\exp\left(-X z\right)
\right]
{\rm d}t
\mathscr{L}_F^{n-m}(z)
{\rm d}z.
\end{align*}
This concludes the proof of the identity for $\mathbb{E}[\widehat{\,_iIG}_{m;\text{min}}]$.

The derivation of $\mathbb{E}[\widehat{\,^iIG}_{m;\text{max}}]$ is analogous to that of $\mathbb{E}[\widehat{\,_iIG}_{m;\text{min}}]$, using identity \eqref{exp-min-0} instead of  \eqref{id-minimum-0}, so the proof is omitted for brevity.

Thus, we have complete the proof.
\end{proof}

We now apply Theorem \ref{main-theorem}  to get explicit formulas of $\mathbb{E}[\widehat{\,_iIG}_{m;\text{min}}]$ and $\mathbb{E}[\widehat{\,^iIG}_{\text{max}}]$ in gamma populations, confirming their unbiasedness.

\begin{corollary}\label{corollary-main}
	Let $X_1, X_2,\ldots, X_m$ be independent copies of $X\sim\text{Gamma}(\alpha,\lambda)$. For each $i=1,\ldots,m$, we have:
	\begin{align*}
	\mathbb{E}[\widehat{\,_iIG}_{m;\text{min}}]
&=
{1\over m}
\left[
1
-
{1\over\alpha}
\int_0^\infty
\left\{1-
{\gamma(\alpha,v)\over \Gamma(\alpha)} 
\right\}^m
{\rm d}v 
\right]
=
IG_{m;\text{min}},
\end{align*}
where $IG_{m;\text{min}}$ is the extended lower Gini index given in Proposition \ref{ext-gini-index}. Thus, the estimator $\widehat{\,_iIG}_{m;\text{min}}$ is unbiased
for gamma populations.	
\end{corollary}
\begin{proof}
For $X\sim\text{Gamma}(\alpha,\lambda)$ direct computation gives
\begin{align}\label{eq-1}
\mathbb{E}\left[
X \exp\left(-Xz\right)
\right]
=
{\alpha\lambda^\alpha\over(z+\lambda)^{\alpha+1}},
\quad
\mathbb{E}\left[
\mathds{1}_{\{X\leqslant t\}}
\exp\left(-X z\right)
\right]
=
{\lambda^\alpha\over (z+\lambda)^\alpha} \, {\gamma(\alpha,(z+\lambda)t)\over \Gamma(\alpha)}.
\end{align}
As $\mathscr{L}_F(z)=\lambda^\alpha/(z+\lambda)^\alpha$, Theorem \ref{main-theorem} yields
\begin{align*}
	\mathbb{E}[\widehat{\,_iIG}_{m;\text{min}}]
	&=
	{n\over m}
	\left[
	\int_0^\infty
{\alpha \lambda^{\alpha n}\over (z+\lambda)^{\alpha n+1}}
	{\rm d}z
	-
	\int_0^\infty
	\int_0^\infty
	\left\{
	1-
{\gamma(\alpha,(z+\lambda)t)\over \Gamma(\alpha)} 
\right\}^m
	{\rm d}t \,
{\lambda^{\alpha n}\over (z+\lambda)^{\alpha n}}
	{\rm d}z
	\right].
\end{align*}
Making the change of variable $v= (z + \lambda)t$, the above identity becomes
\begin{align*}
	\mathbb{E}[\widehat{\,_iIG}_{m;\text{min}}]
	&=
	{1\over m}
	\left[
1
	-
	{1\over\alpha}
	\int_0^\infty
	\left\{1-
	{\gamma(\alpha,v)\over \Gamma(\alpha)} 
	\right\}^m
	{\rm d}v 
	\right]
	\int_0^\infty
	{\alpha n \lambda^{\alpha n}\over (z+\lambda)^{\alpha n+1}}
	{\rm d}z.
\end{align*}
Since $	\int_0^\infty
{\alpha n \lambda^{\alpha n}/ (z+\lambda)^{\alpha n+1}}
{\rm d}z=1$, from Proposition \ref{ext-gini-index} the proof follows.
\end{proof}

\begin{corollary}\label{corollary-main-1}
	Let $X_1, X_2,\ldots, X_m$ be independent copies of $X\sim\text{Gamma}(\alpha,\lambda)$. For each $i=1,\ldots,m$, we have:
	\begin{align*}
		\mathbb{E}[\widehat{\,^iIG}_{m;\text{max}}]
		&=
		{1\over m}
\left[
{1\over\alpha}
\int_0^\infty 
\left\{1-{\gamma^m(\alpha, v)\over\Gamma^m(\alpha) }\right\}
{\rm d}v
-
1
\right]
		=
	IG_{m;\text{max}},
	\end{align*}
	where $IG_{m;\text{max}}$ is the extended upper Gini index given in Proposition \ref{ext-gini-index-2}. Hence, the estimator $\widehat{\,^iIG}_{m;\text{max}}$ is unbiased
	for gamma populations.	
\end{corollary}
\begin{proof}
	As $\mathscr{L}_F(z)=\lambda^\alpha/(z+\lambda)^\alpha$, by using \eqref{eq-1} in Theorem \ref{main-theorem}, we get
\begin{align*}
	\mathbb{E}[\widehat{\,^iIG}_{m;\text{max}}]
	=
	{n\over m}
	\Bigg[	
	\int_0^\infty
	\int_0^\infty
	\left\{
	1
	-
	{\gamma^m(\alpha,(z+\lambda)t)\over \Gamma^m(\alpha)}
	\right\}
	{\rm d}t \,
{\lambda^{\alpha n}\over (z+\lambda)^{\alpha n}}
	{\rm d}z
	-
	\int_0^\infty
{\alpha\lambda^{\alpha n}\over (z+\lambda)^{\alpha n+1}}
	{\rm d}z
	\Bigg].
\end{align*}
With the change of variable $v= (z + \lambda)t$, the above identity transforms into
\begin{align*}
	\mathbb{E}[\widehat{\,^iIG}_{m;\text{max}}]
	=
	{n\over m}
	\Bigg[	
	{1\over\alpha}
	\int_0^\infty
	\left\{
	1
	-
	{\gamma^m(\alpha,v)\over \Gamma^m(\alpha)}
	\right\}
	{\rm d}v 
	-
1
	\Bigg]
		\int_0^\infty
	{\alpha\lambda^{\alpha n}\over (z+\lambda)^{\alpha n+1}}
	{\rm d}z.
\end{align*}
Given that $\int_0^\infty {\alpha n \lambda^{\alpha n}}/{(z+\lambda)^{\alpha n+1}} {\rm d}z = 1$, Proposition \ref{ext-gini-index-2} yields the result.
\end{proof}

\begin{remark}
	Note that Corollaries \ref{corollary-main} and \ref{corollary-main-1} generalizes prior findings by \cite{Deltas2003} and \cite{Baydil2025}.
\end{remark}

\begin{remark}
The scale invariance of $\widehat{\,_iIG}_{m;\text{min}}$ and $\widehat{\,^iIG}_{m;\text{max}}$ implies that $\mathbb{E}[\widehat{\,_iIG}_{m;\text{min}}]$ and $\mathbb{E}[\widehat{\,^iIG}_{m;\text{max}}]$ do not depend on $\lambda$, as stated in Corollaries \ref{corollary-main} and \ref{corollary-main-1}.
\end{remark}

Corollaries \ref{corollary-main} and \ref{corollary-main-1} imply the following result.
\begin{proposition}\label{prop-main}
	Let $X_1, X_2,\ldots, X_m$ be independent copies of $X\sim\text{Gamma}(\alpha,\lambda)$. For each $i=1,\ldots,m$, we have:
		\begin{align*}
		\mathbb{E}[\widehat{IG}_m]
		\equiv
		\mathbb{E}[\widehat{\,_iIG}_{m;\text{min}}]
		+
		\mathbb{E}[\widehat{\,^iIG}_{m;\text{max}}]
		&=
		{1\over \alpha m}
		\left[
		\int_0^\infty 
		\left\{
		1
		-
		{\gamma^m(\alpha,v)\over\Gamma^m(\alpha)}
		\right\}
		{\rm d}v  \,
		\nonumber
		-
		\int_0^\infty 
		\left\{
		1
		-
		{\gamma(\alpha,v)\over\Gamma(\alpha)}
		\right\}^m
		{\rm d}v
		\right]
		\\[0,2cm]
		&=
		{IG}_{m;\text{min}}
		+
		{IG}_{m;\text{max}}
				\\[0,2cm]
		&
		\equiv
		IG_m
		= 
		\dfrac{\mathbb{E}[\max\{X_1,\ldots,X_m\}-\min\{X_1,\ldots,X_m\}]}{m\mu},
	\end{align*}
	where $IG_m$ is the $m$th Gini index and $\widehat{IG}_m$ is its  corresponding estimator (initially introduced by \cite{Vila2025}) given in  Remark \ref{rem-gini-index}.
\end{proposition}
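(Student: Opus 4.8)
The plan is to derive this result purely by adding the two explicit formulas already established in Corollaries \ref{corollary-main} and \ref{corollary-main-1}, so that no new probabilistic machinery is required. I would begin by recording the two gamma-case evaluations,
\[
\mathbb{E}[\widehat{\,_iIG}_m]
=
\frac{1}{m}\left[1 - \frac{1}{\alpha}\int_0^\infty \left\{1 - \frac{\gamma(\alpha,v)}{\Gamma(\alpha)}\right\}^m {\rm d}v\right],
\quad
\mathbb{E}[\widehat{\,^iIG}_m]
=
\frac{1}{m}\left[\frac{1}{\alpha}\int_0^\infty \left\{1 - \frac{\gamma^m(\alpha,v)}{\Gamma^m(\alpha)}\right\} {\rm d}v - 1\right],
\]
each holding for every $1 \leqslant i \leqslant m$.

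The first step is to sum these two identities term by term. The constant terms $+1$ and $-1$ inside the brackets cancel exactly, and pulling the common factor $1/\alpha$ out of the two surviving integrals yields
\[
\mathbb{E}[\widehat{\,_iIG}_m] + \mathbb{E}[\widehat{\,^iIG}_m]
=
\frac{1}{\alpha m}\left[\int_0^\infty \left\{1 - \frac{\gamma^m(\alpha,v)}{\Gamma^m(\alpha)}\right\} {\rm d}v - \int_0^\infty \left\{1 - \frac{\gamma(\alpha,v)}{\Gamma(\alpha)}\right\}^m {\rm d}v\right],
\]
which is the claimed closed form. There is no analytic difficulty in this manipulation; the only point to check is that the two improper integrals converge individually, so that the bracketed difference is well defined, and this is guaranteed by the standing convergence hypotheses carried over from Theorem \ref{main-theorem} (for a gamma population both integrands decay like an upper incomplete gamma tail, hence are integrable for every $\alpha>0$).

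For the remaining two equalities I would invoke the Corollaries a second time: they assert $\mathbb{E}[\widehat{\,_iIG}_m] = \,_iIG_m$ and $\mathbb{E}[\widehat{\,^iIG}_m] = \,^iIG_m$, so the sum of expectations equals $\,_iIG_m + \,^iIG_m$, and the identification of this quantity with $IG_m$ is immediate from the decomposition $\,_iIG_m + \,^iIG_m = IG_m$ recorded in Section \ref{sec:02}, together with the definition of the $m$th Gini index as $\mathbb{E}[\max\{X_1,\ldots,X_m\} - \min\{X_1,\ldots,X_m\}]/(m\mu)$. The argument thus contains no genuine obstacle; the only feature worth emphasizing is that, although each summand is attached to a chosen position $i$, the sum is manifestly free of $i$, and the unbiasedness of $\widehat{IG}_m$ for gamma populations drops out at once from the unbiasedness of its two components.
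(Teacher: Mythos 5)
Your proposal is correct and follows exactly the paper's route: the paper states this proposition as an immediate consequence of Corollaries \ref{corollary-main} and \ref{corollary-main-1} (adding the two gamma-case formulas, cancelling the $\pm 1$ terms, and invoking the decomposition $\,_iIG_m + \,^iIG_m = IG_m$ from Section~\ref{sec:02}), which is precisely what you do. Your added remarks on convergence of the two integrals and on the $i$-independence of the sum are sound but not needed beyond what the paper already assumes.
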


\begin{remark}
	Note that Proposition \ref{prop-main} recovers a known result from \cite{Vila2025}.
\end{remark}

\section{Illustrative simulation study}\label{sec:04}

In this section, we present a Monte Carlo simulation study to evaluate the finite-sample performance of the proposed extended Gini index estimators defined in Equations~\eqref{estimator} and~\eqref{estimator-1}, which correspond to the sample-based versions of the extended lower and upper Gini indices introduced in Section~\ref{sec:02}.

The idea is to empirically assess for a particular case the theoretical unbiasedness of these estimators for gamma-distributed data, as established in Section~\ref{sec:03}. To this end, we simulate independent random samples of size $n \in \{10, 30, 50, 100, 200\}$ from a gamma distribution with shape parameter $\alpha = 2$ and rate parameter $\lambda = 1$. For each replication, we compute the extended lower Gini index estimator $\widehat{\,_iIG}_{m;\text{min}}$ and extended upper Gini index estimator $\widehat{\,^iIG}_{m;\text{max}}$ using fixed parameters $m = 3$ and $i = 3$, based on Equations~\eqref{estimator} and~\eqref{estimator-1}, respectively.

The simulation is repeated $N_{\text{sim}} = 500$ times for each sample size to estimate the empirical bias and mean squared error (MSE), computed as:
\begin{align*}
\widehat{\text{Bias}} = \frac{1}{N_{\text{sim}}} \sum_{\ell=1}^{N_{\text{sim}}} \left(\widehat{i\text{IG}}_m^{(\ell)} - \text{IG}_m \right),\quad
\widehat{\text{MSE}}  = \frac{1}{N_{\text{sim}}} \sum_{\ell=1}^{N_{\text{sim}}} \left(\widehat{i\text{IG}}_m^{(\ell)} - \text{IG}_m \right)^2,
\end{align*}
where $\widehat{i\text{IG}}_m^{(\ell)}\in\{\widehat{\,_iIG}_{m;\text{min}},\widehat{\,^iIG}_{m;\text{max}}\}$ is the $\ell$-th Monte Carlo replicate of the estimator computed using Equations~\eqref{estimator} or~\eqref{estimator-1}, and $\text{IG}_m\in\{{IG}_{m;\text{min}},{IG}_{m;\text{max}}\}$ is the corresponding theoretical value obtained from the expressions in Propositions~\ref{ext-gini-index} and~\ref{ext-gini-index-2}.

Table~\ref{tab:sim_results} summarizes the results of the Monte Carlo simulation study. For the extended lower Gini estimator $\widehat{\,_iIG}_{m;\text{min}}$, the empirical bias is small across all sample sizes considered, ranging from approximately $0.019$ (at $n = 10$) to $0.016$ (at $n = 200$). In line with the theoretical unbiasedness established in Section~\ref{sec:03}, the bias exhibits a clear decreasing trend as $n$ increases. Furthermore, the MSE decreases markedly with the sample size, from $0.00353$ at $n = 10$ to $0.00036$ at $n = 200$, which is consistent with the consistency of the estimator. For the extended upper Gini estimator $\widehat{\,^iIG}_{m;\text{max}}$, the empirical bias is negligible throughout, remaining very close to zero for all values of $n$ and alternating in sign. The MSE also decreases steadily with increasing sample size, from $0.00381$ at $n = 10$ to $0.00016$ at $n = 200$. Overall, the results show the good performance of both proposed estimators and corroborate the theoretical findings derived in Section~\ref{sec:03}.

\begin{table}[!ht]
\centering
\caption{Empirical bias and MSE of the extended lower and upper Gini estimators ($m=3$) based on 500 Monte Carlo replications under the Gamma(2,1) distribution.}
\centering
\begin{tabular}[t]{ccccccccccccccccc}
\toprule
$n$ & Bias & MSE & Estimator\\
\midrule
\cellcolor{gray!10}{10} & \cellcolor{gray!10}{0.01923} & \cellcolor{gray!10}{0.00353} & \cellcolor{gray!10}{Lower Gini}\\
30 & 0.02000 & 0.00117 & Lower Gini\\
\cellcolor{gray!10}{50} & \cellcolor{gray!10}{0.01725} & \cellcolor{gray!10}{0.00072} & \cellcolor{gray!10}{Lower Gini}\\
100 & 0.01569 & 0.00044 & Lower Gini\\
\cellcolor{gray!10}{200} & \cellcolor{gray!10}{0.01618} & \cellcolor{gray!10}{0.00036} & \cellcolor{gray!10}{Lower Gini}\\
\addlinespace
10 & -0.00369 & 0.00381 & Upper Gini\\
\cellcolor{gray!10}{30} & \cellcolor{gray!10}{-0.00137} & \cellcolor{gray!10}{0.00114} & \cellcolor{gray!10}{Upper Gini}\\
50 & -0.00006 & 0.00060 & Upper Gini\\
\cellcolor{gray!10}{100} & \cellcolor{gray!10}{-0.00116} & \cellcolor{gray!10}{0.00035} & \cellcolor{gray!10}{Upper Gini}\\
200 & 0.00019 & 0.00016 & Upper Gini\\
\bottomrule
\end{tabular}\label{tab:sim_results}
\end{table}

\section{Application to real data}
\label{sec:05}

In order to illustrate the practical utility of the proposed extended Gini index estimators defined in Equations~\eqref{estimator} and~\eqref{estimator-1}, we consider real-world income data. Specifically, we analyze data on gross domestic product (GDP) per capita (expressed in international-\$ at 2021 prices) for eleven South American countries in the year 2023; see \url{https://ourworldindata.org/grapher/gdp-per-capita-worldbank} and Table~\ref{table_countries}. The data reveal a marked degree of income dispersion across countries, ranging from approximately $9{,}844$ international-\$ (Bolivia) to $49{,}315$ international-\$ (Guyana), which motivates the use of inequality-sensitive measures such as those proposed in this paper.

% latex table generated in R 4.5.0 by xtable 1.8-4 package
% Thu May 22 14:13:38 2025
\begin{table}[!ht]
\centering
\caption{2023 GDP per capita for South America countries.}
\begin{tabular}{lr}
  \hline
Countries ($n=11$) & GDP (international-\$ in 2021 prices) \\
  \hline
Guyana & 49315.16 \\
  Uruguay & 31019.31 \\
  Chile & 29462.64 \\
  Argentina & 27104.98 \\
  Suriname & 19043.71 \\
  Brazil & 19018.24 \\
  Colombia & 18692.38 \\
  Paraguay & 15783.11 \\
  Peru & 15294.26 \\
  Ecuador & 14472.32 \\
  Bolivia & 9843.97 \\
   \hline
\end{tabular}
\label{table_countries}
\end{table}

We fit a gamma distribution to the GDP per capita data utilizing the \texttt{fitdistrplus} package in \texttt{R} \citep{fitdistrplus:15}. The diagnostic plots shown in Figure~\ref{fig:diagnostic_plot} indicate that the gamma distribution provides a reasonable fit to the data. To further assess this, we apply the Kolmogorov-Smirnov (KS) and Cramér-von Mises (CvM) goodness-of-fit tests, obtaining $p$-values of $0.508$ and $0.784$, respectively, which provide no evidence against the gamma distribution assumption. Overall, these findings support the use of the gamma model for the income data and justify the application of the estimators established in Section~\ref{sec:03}.

\begin{figure}[H]
\centering
\includegraphics[width=0.8\textwidth]{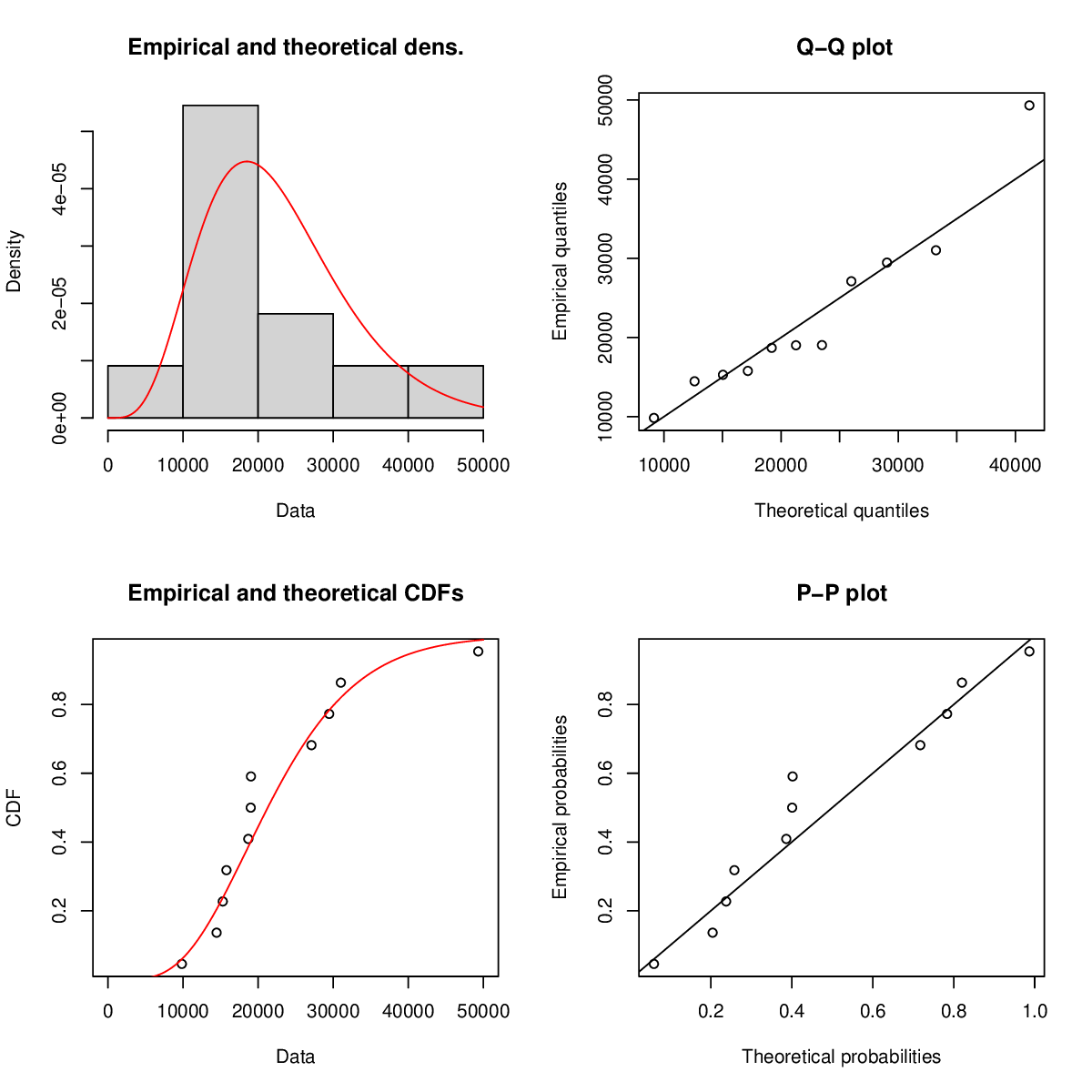}
\caption{Diagnostic plots based on the gamma distribution applied to 2023 GDP per capita data.}
\label{fig:diagnostic_plot}
\end{figure}

Figures~\ref{fig:heatmap_lower_gini} and~\ref{fig:heatmap_upper_gini} display heatmaps of the estimated extended lower and upper Gini indices, respectively, computed from the 2023 GDP per capita data for South American countries over a grid of parameter values $(m, i)$.

Figure~\ref{fig:heatmap_lower_gini} reveals a clear and systematic pattern in the estimated extended lower Gini index $\widehat{\,_iIG}_{m;\text{min}}$: for fixed $m$, the estimates increase monotonically as $i$ grows, which reflects the fact that larger values of $i$ compare observations that are progressively more distant from the minimum order statistic, thereby capturing a greater degree of spread in the lower tail of the income distribution. Furthermore, for fixed $i$, the estimates tend to increase with $m$, since larger values of $m$ amplify the expected gap between the reference minimum and the $i$th order statistic in a sample of that size. It is worth mentioning that this joint sensitivity to both $m$ and $i$ allows us to modulate the emphasis placed on lower-tail inequality in a flexible, parameter-driven manner, a feature that is not available in classical Gini-type measures.
\begin{figure}[H]
\centering
\includegraphics[width=0.7\textwidth]{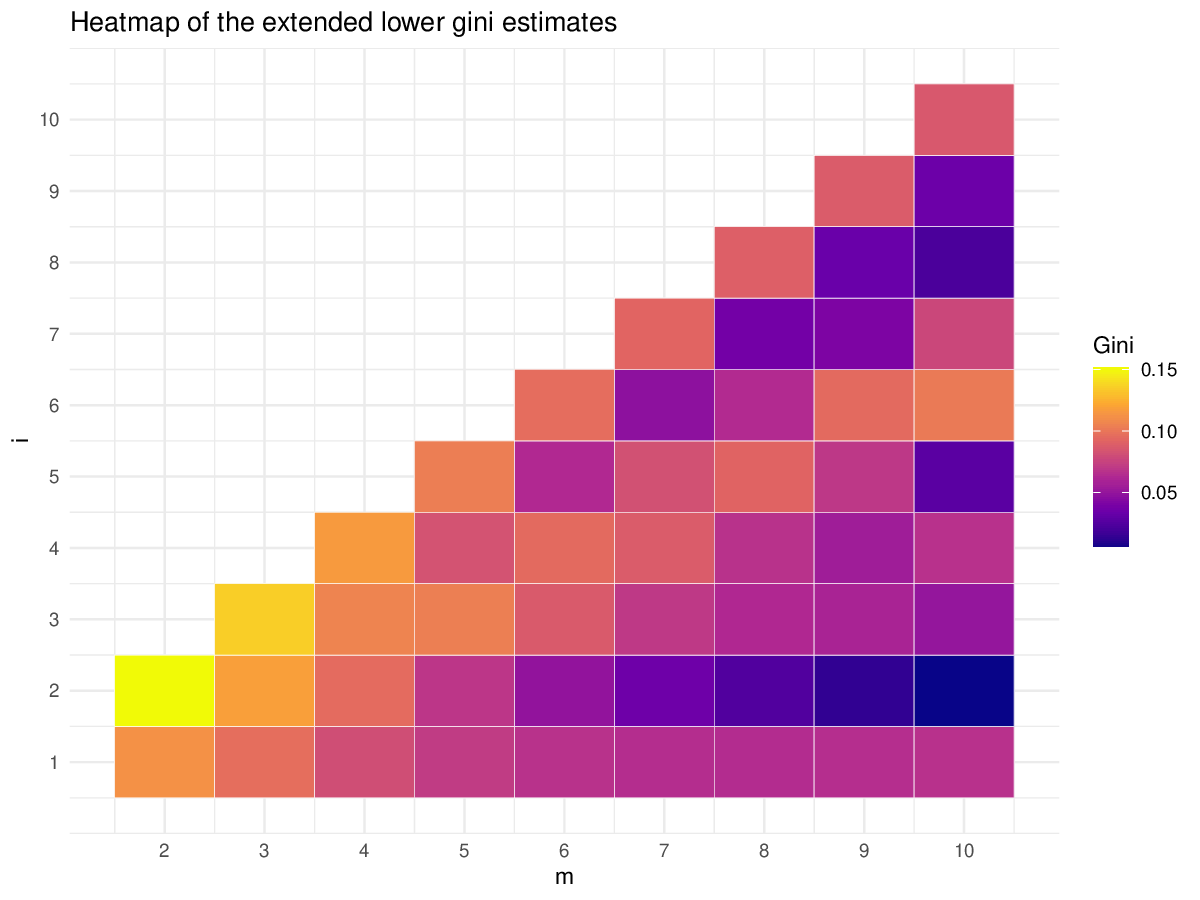}
\caption{Heatmap of the estimates of the extended lower Gini index estimator $\widehat{\,_iIG}_{m;\text{min}}$ for different values of $m$ and $i$, based on 2023 GDP per capita data.}
\label{fig:heatmap_lower_gini}
\end{figure}
\begin{figure}[H]
\centering
\includegraphics[width=0.7\textwidth]{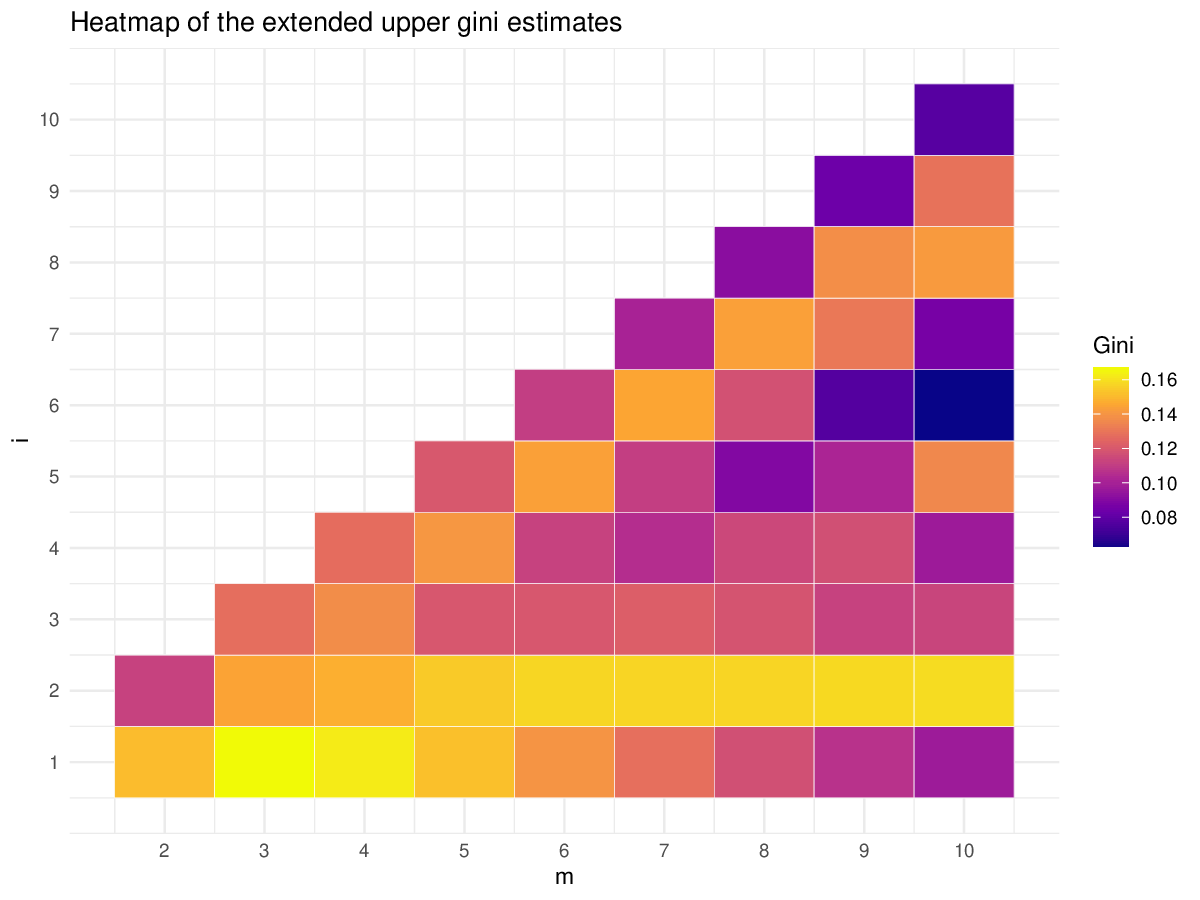}
\caption{Heatmap of the estimates of the extended upper Gini index estimator $\widehat{\,^iIG}_{m;\text{max}}$ for different values of $m$ and $i$, based on 2023 GDP per capita data.}
\label{fig:heatmap_upper_gini}
\end{figure}

Figure~\ref{fig:heatmap_upper_gini} presents the analogous results for the extended upper Gini index $\widehat{\,^iIG}_{m;\text{max}}$. In general, the estimated values are larger than those of the corresponding lower index for the same parameter configurations, which is consistent with the fact that the maximum order statistic constitutes a more extreme reference point than the minimum, resulting in larger expected pairwise differences and, consequently, a higher measured degree of inequality. The gradient structure of the heatmap mirrors that of Figure~\ref{fig:heatmap_lower_gini} in qualitative terms, with estimates increasing as $m$ and $i$ grow, whereas the overall level of the index is shifted upward relative to the lower case, reflecting the greater sensitivity of the upper index to the concentration of income at the top of the distribution. Overall, the results show that the extended lower and upper Gini indices yield a richer characterization of income inequality among South American countries than a single aggregate coefficient, making it possible to assess distributional features at both the lower and upper tails simultaneously.

\section{Concluding remarks}\label{sec:06}
%
%We introduced two new flexible Gini indices (extended lower and upper), which are defined by the differences between the first observation, the smallest  order statistic, and the largest order statistic. This methodology enables the measurement of inequality with respect to specific sample positions, an analysis not afforded by conventional indices. We derived closed-form expressions for their expectations under the gamma distribution and established the unbiasedness of the proposed estimators. Monte Carlo simulation studies were in line with the theoretical unbiasedness. We have applied the proposed indices to a real income data set corresponding to 2023 GDP per capita for South America countries.  The results that the extended lower and upper Gini indices provide a rich spectrum of inequality estimates.

{
We introduced two flexible extensions of the classical Gini index, referred to as the extended lower and extended upper Gini indices. The proposed measures are based on the differences between an observation and the minimum and maximum order statistics in samples of size $m$, thereby providing a position-oriented perspective on inequality. Unlike conventional Gini-type measures, they allow inequality to be assessed relative to the lower and upper tails of the distribution and offer a more detailed characterization of the distribution of income disparities.

From an inferential perspective, we established the consistency and asymptotic normality of the proposed estimators under mild regularity conditions. Furthermore, for gamma-distributed populations, we derived closed-form expressions for their expectations and proved their unbiasedness, thereby extending previous results available in the literature. Monte Carlo simulation studies corroborated the theoretical findings and demonstrated the satisfactory finite-sample performance of the estimators.

The methodology was illustrated through an analysis of 2023 GDP per capita data from South American countries. The empirical results showed that the extended lower and upper Gini indices provide a richer spectrum of inequality measures than traditional Gini-type indices, revealing features of the distribution that may remain hidden when inequality is summarized by a single aggregate coefficient. These findings suggest that the proposed indices constitute useful complementary tools for inequality analysis and may be fruitfully applied in a variety of economic and social contexts.
}
%\clearpage
%\clearpage

\paragraph*{Acknowledgements}
The research was supported in part by CNPq and CAPES grants from the Brazilian government.

\paragraph*{Disclosure statement}
There are no conflicts of interest to disclose.

\paragraph*{Data availability statement}
The data used in this study are publicly available and can be accessed online at the following link: \url{https://ourworldindata.org/grapher/gdp-per-capita-worldbank}

\paragraph*{Author contribution statement}
Roberto Vila: Conceptualization, Methodology, Formal analysis, Writing-original draft.
Helton Saulo: Supervision, Validation, Writing-review \& editing.

%%%%%%%%%%%%%%%%%%%%%%%%%%%%%%%%%%%%%%%%%%%%%%%%%%%%%%%%%%%%%

%%\bibliographystyle{apalike}
%%\bibliography{references}

\begin{thebibliography}{}
	
%	Gamelin FX, Baquet G, Berthoin S, Thevenet D, Nourry C, Nottin S, Bosquet L (2009) Effect of high intensity intermittent training on heart rate variability in prepubescent children. Eur J Appl Physiol 105:731-738. https://doi.org/10.1007/s00421-008-0955-8

\bibitem[Aaberge, 2000]{Aaberge2000}
Aaberge, R. 2000.
Characterizations of Lorenz Curves and Income Distributions. 
{\it Social Choice and Welfare} 17:639--653.

\bibitem[Baydil et~al., 2025]{Baydil2025}
Baydil, B., de~la Peña, V. H., Zou, H., and Yao, H. 2025.
Unbiased estimation of the gini coefficient.
{\it Statistics \& Probability Letters}  222:110376.
\url{https://doi.org/10.1016/j.spl.2025.110376}

\bibitem[Delignette-Muller and Dutang, 2015]{fitdistrplus:15}
Delignette-Muller, M. L., and  Dutang, C. 2015.
fitdistrplus: An {R} package for fitting distributions.
{\it Journal of Statistical Software} 64(4):1--34.
\url{https://www.jstatsoft.org/article/view/v064i04}

\bibitem[Deltas, 2003]{Deltas2003}
Deltas, G. 2003.
The small-sample bias of the gini coefficient: Results and
implications for empirical research.
{\it Review of Economics and Statistics} 85:226--234.
\url{https://www.jstor.org/stable/3211637}


	\bibitem[Donaldson and Weymark, 1980]{Donaldson1980}
Donaldson, D., and  Weymark, J. A. 1980.
Single Parameter Generalization of the Gini Indices of Inequality. 
{\it Journal of Economic Theory}  22:67--86.
\url{https://www.sciencedirect.com/science/article/abs/pii/0022053180900654?via%3Dihub}

\bibitem[Gavilan-Ruiz et al., 2024]{Gavilan-Ruiz2024}
Gavilan-Ruiz, J. M., Ruiz-Gándara, Á., Ortega-Irizo, F. J., and Gonzalez-Abril, L. 2024.
Some Notes on the Gini Index and New Inequality
Measures: The nth Gini Index. 
{\it Stats} 7:1354--1365. 
\url{https://doi.org/10.3390/stats7040078}.

\bibitem[Gini, 1936]{Gini1936}
Gini, C. 1936.
On the measure of concentration with special reference to income and statistics.
{\it Colorado College Publication, General Series} 208:73--79.

{
	\bibitem[Hoeffding, 1948]{Hoeffding1948}
	Hoeffding, W. 1948.
	A class of statistics with asymptotically normal distribution.
	\textit{Annals of Mathematical Statistics},
	19(3):293--325.
}

	\bibitem[Kakwani, 1980]{Kakwani1980}	
Kakwani, N. 1980.
On a Class Poverty Measures. 
{\it Econometrica} 48:437--446.	
\url{https://www.jstor.org/stable/1911106?origin=crossref}

{
	\bibitem[Lee, 1990]{Lee1990}
	Lee, A. J. 1990.
	\textit{U-Statistics: Theory and Practice}.
	Marcel Dekker, New York.
}

%\bibitem[McDonald and Jensen, 1979]{McDonald1979}
%McDonald JB, Jensen BC (1979)
%An analysis of some properties of alternative measures of income inequality based on the gamma distribution function.
%J Amer Statist Assoc 74:856--860.
%\url{https://www.tandfonline.com/doi/abs/10.1080/01621459.1979.10481042}

%\bibitem[Salama and Koch, 2019]{Salama-Koch2019}
%Salama I, Koch G (2019)
%On the Maximum-Minimums Identity: Extension and Applications. Amer Statist 74(3):297--300. 
%\url{https://doi.org/10.1080/00031305.2019.1638832}.

\bibitem[Vila and Saulo, 2025]{Vila2025}
Vila R., and Saulo H. 2025.
The mth Gini index estimator: Unbiasedness for gamma populations.  Journal of Economic Inequality. 
\url{https://doi.org/10.1007/s10888-025-09715-3}


		\bibitem[Yin et al., 2023]{Yin2024}
Yin, X., Balakrishnan, N., and  Yin, C. 2023.
Bounds for Gini's mean difference based on first four moments, with some applications. 
{\it Statistical Papers} 64:2081--2100.


	\bibitem[Yitzhaki, 1983]{Yitzhaki1983}
Yitzhaki, S. 1983.
On a Extension of Gini inequality index. 
{\it International Economic Review} 24:617--628.
\url{https://www.jstor.org/stable/2648789?origin=crossref}


\end{thebibliography}

\end{document}